\documentclass[pra, twocolumn, amsmath, amssymb, superscriptaddress, nofootinbib]{revtex4}

\usepackage[colorlinks,linkcolor=red,citecolor=blue,urlcolor=blue]{hyperref}
\usepackage{graphicx}
\usepackage{amsmath}
\usepackage{amsthm}
\usepackage{amssymb}
\usepackage{amsbsy}
\usepackage{bbm}
\usepackage[usenames]{color}
\usepackage{bm}
\usepackage{flafter}
\usepackage{mathrsfs}


\newcommand{\ket}[1]{\left|#1\right\rangle}
\newcommand{\bra}[1]{\left\langle #1 \right|}

\newcommand{\uket}[1]{|#1\rangle}
\newcommand{\ubra}[1]{\langle #1 |}

\newcommand{\op}[1]{\hat{#1}}
\newcommand{\proj}[1]{\mathbbm{#1}} 
\newcommand{\spc}[1]{\mathcal{#1}}

\definecolor{DarkRed}{rgb}{0.7,0,0}

 \newtheorem{theorem}{Theorem}

\DeclareMathOperator{\Tr}{Tr}


\newcounter{ComntCntr} 


\newcommand{\pay}[1]{\mathfrak{p}^*(#1)}
\newcommand{\payr}[1]{\mathfrak{q}^*(#1)}

\begin{document}

\title{Quantifying entanglement when measurements are imperfect or restricted}

\author{Sebastian Meznaric}
\affiliation{Clarendon Laboratory, University of Oxford, Parks Road, Oxford OX1 3PU, United Kingdom}

\begin{abstract}
Motivated by the increasing ability of experimentalists to perform detector tomography, we consider how to incorporate the imperfections and restrictions of available measurements directly into the quantification of entanglement. Exploiting the idea that the recently introduced semiquantum nonlocal games as the gauge of the amount of entanglement in a state, we define an effective entanglement functional giving us effective entanglement when the measurement operators one has at their disposal are restricted by either fundamental considerations, such as superselection rules, or practical inability to conduct precise measurements. We show that effective entanglement is always reduced by restricting measurements. We define effective entanglement as the least amount of entanglement necessary to play all semiquantum nonlocal games at least as well with unrestricted measurements as with the more entangled original state and restricted measurements. We show that simple linear relationship between effective and conventional G-concurrence, generalization of concurrence, can be obtained when completely positive maps are used to describe measurement restrictions. We consider how typical measurement errors like photon loss and phase damping degrade effective entanglement in quantum optical experiments, as well as the impact of mass superselection rules on entanglement of massive indistinguishable particles. The flexibility of the effective entanglement formalism allows this single-particle entanglement to be calculated in the presence of a local Bose Einstein condensate reference frame with varying phase uncertainty, thereby interpolating between the complete breaking or strict application of the super-selection rule. 
\end{abstract}

\maketitle

\section{Introduction}
The act of measuring is not only the foundation for connecting quantum phenomena to classical events \cite{Braginski92}, it is also a key feature in emerging quantum technologies ranging from Heisenberg limited metrology to one-way quantum computing. Complementing this there has been remarkable progress recently \cite{Lundeen09, Feito09} in obtaining a complete characterisation of a quantum optical experiment by performing tomography not only of the input state and dynamics of an apparatus, but also of the photon detector itself. This completes the triad of components, input, process and measurement, that are essential ingredients for any quantum information protocol \cite{Gisin10, Kimble08, Gisin07} and provides crucial information for the development of techniques to combat both imperfections \cite{Datta11-1, Hamilton09} and restrictions of experimental devices. Given these advancements it is therefore important to quantify how entanglement , an essential resource that enables quantum improvements over classical information protocols \cite{Horodecki09, Plenio07}, is affected by imperfections and restrictions on the available measurements. 

In quantifying the quantum information resources we usually proceed by fixing a certain set of dynamical processes that are allowed during this procedure \cite{Bartlett07}. The allowed processes are usually cheap, while those that are not allowed are expensive. Without this distinction, it is impossible to consider some states as resources. To see why this is the case, consider the reverse scenario - all dynamical processes are free or at least cheap. Given any two quantum states $\rho$ and $\sigma$, there exists some dynamical process that maps $\rho$ to $\sigma$. Therefore it does not matter which state we are given, as we can easily transform it to any other state, meaning that no state is particularly more useful than any other.

However, some states do become more important when a restriction is made concerning the allowed dynamical processes. As an example, suppose that a state $\rho$ is shared among two observers who are located very far away from one another and where transferring of one observer's part of the state to the other observer is very expensive, because quantum states are very fragile and difficult to transfer. Then the operations that are now cheap are precisely those that can be performed locally by the observers, allowing only classical communication between them. This defines the now ubiquitously used class of local operations and classical communication (LOCC). It has been shown that some states, such as separable states, cannot be transformed into all other states using only LOCC. This establishes a natural hieararchy of states, where those states that can be transformed into other states are higher up on the hierarchy and vice versa. The ability of the superior states to transfer into a greater number of states makes them a valuable resource, while the value of the states further down the chain is diminished. We therefore see that in order to understand the value of a given state, it is essential to understand which dynamical processes are readily available.

The states that are found closer to the top of the hierarchy are also better at performing those quantum information tasks that use only the operations from the allowed class. This is because the ability to transform the state $\rho$ into $\sigma$ implies that the state $\rho$ can be used to perform the task at least as well as the state $\sigma$ by first transforming it into $\sigma$ and then completing the task. This suggests another way to attain a hierarchy of states - compare their performance using a specific set of tasks one would like to use them for and place the states that perform better further up in the hierarchy structure.

However, in an experimental setup it is rather rare that one can peform all the operations in the often used classes of operations (such as LOCC). Almost always the apparatus will posses some kind of imperfections, such as for example making errors during measurement \cite{Braginski92}. In practice this imposes a further restriction on which dynamical processes one is able to perform. In our example, those processes that never make an experimental error cannot be performed. Since the allowed set of dynamical processes is therefore now smaller, we must correspondingly update our state hierarchy and potentially change the way resources are counted. Particularly, one would like to compare  the resource with restricted operations to that attainable without the additional restrictions. One potential method, and one we will adopt in this article, is to compare the performance of states in certain tasks with and without restrictions and then use the results to place the state at an appropriate position in the hierarchy. 

It has recently been discovered that the nonlocality and entanglement are intimately related in that all entangled states can be used to play semiquantum nonlocal games better than any separable state \cite{Buscemi12}. The games are constructed so that a referee sends a question in the form of a quantum state and the players, without communicating, perform joint measurements on the question state and their joint resource state. They receive payout depending on their answers, which in turn depend on the results of their measurements. It was shown in \cite{Buscemi12} that a state $\rho$ can generate higher payout with all nonlocal games than $\sigma$ if and only if $\rho$ can be tranformed into $\sigma$ using local operations and shared randomness (LOSR). No classical communication is allowed, as classical communication is a precious resource when it comes to nonlocality. Importantly, since LOSR operations cannot increase entanglement, we can conclude that higher payout with the state $\rho$ than $\sigma$ implies that $E(\rho) \geq E(\sigma)$, for any entanglement measure $E$ that is non-increasing under LOSR. Nonlocal games can therefore be used to witness and compare the amount of entanglement present in pairs of quantum states.
 
This article is organised as follows. In section \ref{sec:Definition} we formally define effective entanglement and prove several results characterising measures of it. In section \ref{sec:ImperfectM} we deal with the important case where the imperfection or restriction is described by a completely positive map (CPM)~\cite{Choi75} and show that effective concurrence reduces to being proportional to conventional concurrence~\cite{Wootters98,Gour05} with a CPM dependent scale factor. We then apply these results to quantify the entanglement of indistinguishable particles in section \ref{sec:Indistinguishable}, concentrating on the controversial concept of single-particle entanglement~\cite{Tan91,Hardy94, Peres95}, for the important examples of imperfect and super-selection rule restricted measurements. Finally we conclude in section \ref{sec:Conclusion}.

\section{Nonlocal games as gauge of effective entanglement}\label{sec:Definition}

The definition of effective entanglement relies heavily on using the semiquantum nonlocal games as the comparison gauge. We therefore first proceed to describe the rules of semiquantum nonlocal games (also see \cite{Buscemi12} for further details). They consist of four index sets, $\spc{S} = \{s\}$, $\spc{T} = \{t\}$, $\spc{X}=\{x\}$ and $\spc{Y} = \{y\}$. The referee picks indices $s$ and $t$ at random with probabilities $p(s)$ and $q(t)$ and prepares some corresponding quantum states $\zeta^s$ and $\eta^t$, sending them to players Alice and Bob, respectively. States corresponding to different indices need not be orthogonal, which is why the game is called a semiquantum nonlocal game, to distinguish it from a more classical nonlocal game where the states corresponding to different indices are fully distinguishable. The players must separately compute the respective answers $x \in \spc{X}$ and $y \in \spc{Y}$ and send them to the referee who then computes the payoff using the payoff function $\mathfrak{p}(s,t,x,y)$. Payoff need not be positive, in which case the players must pay the referee.  

Before the game begins the players may confer with one another and use any resources they like to do so in order to coordinate the strategy. After the game has begun, however, they are not allowed to communicate. All they can do is share a joint quantum state $\rho$ and perform joint measurements, described by positive operator-valued measure (POVM), on $\rho$ and the question states sent by the referee with outcomes in $\spc{X}$ and $\spc{Y}$, respectively. The average payoff Alice and Bob expect to obtain is expressed by the formula
\begin{align}
  \pay{\rho} = \max \sum_{s,t,x,y} p(s) q(t) \mathfrak{p}(s,t,x,y) \mu(x,y | s,t), \label{eq:MaximumAveragePayoff}
\end{align}
where $\mu(x,y | s,t)$ is the joint conditional probability of obtaining outcomes $x,y$ given that the question states $\zeta^s$ and $\eta^t$ were sent and is computed using the standard quantum probability formulae, the maximization is over all possible joint POVMS and the function $\pay{\rho}$ implicitly depends on the game players are playing. 

We call a state $\rho_1$ more nonlocal than $\rho_2$, denoted as $\rho_1 \succeq \rho_2$, if and only if for every semiquantum nonlocal game $\pay{\rho_1} \geq \pay{\rho_2}$. It is then shown in \cite{Buscemi12} that $\rho_1 \succeq \rho_2$ if and only if $\rho_1$ can be transformed to $\rho_2$ using only LOSR. This is denoted as $\rho_1 \mapsto \rho_2$. Given an entanglement measure $E$ that is non-increasing under LOSR, we then have that 
\begin{align}
  \rho_1 \succeq \rho_2 \Rightarrow E(\rho_1) \geq E(\rho_2). \label{eq:EntanglementGauge}
\end{align}
Notice that this implies that whenever $\pay{\rho_1} = \pay{\rho_2}$ for all games we must have that $E(\rho_1) = E(\rho_2)$, justifying the idea that semiquantum nonlocal games act as an entanglement gauge. LOSR transformations are a subset of LOCC operations and so any entanglement measure satisfying the standard required properties can be used (for a review of the properties of entanglement measures, see for instance \cite{Plenio07, Horodecki09}). 

Before we look at the nonlocal games in the presence of POVM restrictions, we define what we mean by an effective POVM. Suppose $\op{P}_1, \ldots \op{P}_2$ is a joint POVM over two Hilbert spaces, with the joint state being a product state $\rho_1 \otimes \rho_2$. Then there exists a $\rho_1$-dependent POVM acting only on $\rho_2$ with the same outcome probabilities. Writing $\op{P}_k = \sum_l \lambda_{kl} \op{P}^1_{kl} \otimes \op{P}^2_{kl}$, we can obtain the effective POVM by computing $\Tr_1[\op{P}_k \rho_1 \otimes \rho_2] = \sum_l \lambda_{kl} \Tr[\op{P}_{kl}^1 \rho_1] \op{P}_{kl}^2 \rho_2$, giving $\sum_l \lambda_{kl} \Tr[\op{P}_{kl}^1 \rho_1] \op{P}_{kl}^2$ as the effective operator. 

Since we want to look at the effective entanglement when allowed POVMs acting on $\rho$ alone are restricted, we therefore maximize the Eq.~\ref{eq:MaximumAveragePayoff} over only those POVMs, whose effective POVM on $\rho$ alone belongs to a certain set of POVMs $\spc{R}$. We denote this new maximum average payoff as $\payr{\rho}$. Denote as $\spc{E}$ the set of all states $\bar{\sigma}$ such that $\pay{\sigma} \geq \payr{\rho}$. In words, these are those states whose maximum average payoff function with no restrictions on POVMs is at least as great as the maximum payoff function of state $\rho$ with POVMs restricted. We then define the effective entanglement as
\begin{align}
  \bar{E}(\rho) = \inf\{E(\bar{\sigma}): \bar{\sigma} \in \spc{E}\}.
\end{align}
The functional $\bar{E}(\rho)$ therefore gives us the least amount of entanglement that enables us to perform at least as well in any nonlocal game with no restrictions as we could with the more entangled state $\rho$ with restrictions. From the fact that $\rho \in \spc{E}$ it follows that $\bar{E}(\rho) \leq E(\rho)$. The infimum can be replaced by minimum whenever the set $\spc{E}$ is compact and the entanglement measure $E$ is continuous. This follows from a fundamental theorem of classical analysis, saying that continuous functions mapping compact sets to real numbers attain their infimum on a member of the set \cite{CarothersAnalysis, RudinAnalysis}. Since $\spc{E} \subset \spc{B}(\spc{H})$ and $ \spc{B}(\spc{H})$ is compact, $\spc{E}$ is also compact whenever it is closed due to the fact that closed subsets of compact sets are compact. 

It is useful to have a condition that allows us to more easily find and verify when a particular functional is the effective entanglement $\bar{E}$. We now provide one such condition.
\begin{theorem}
  If there exists a state $\bar{\rho} \in \spc{E}$ such that for all semiquantum nonlocal games $\pay{\bar{\rho}} = \payr{\rho}$ then $\bar{E}(\rho) = E(\bar{\rho})$. 
\end{theorem}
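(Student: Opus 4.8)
The plan is to establish the two inequalities $\bar{E}(\rho)\le E(\bar{\rho})$ and $\bar{E}(\rho)\ge E(\bar{\rho})$ separately. The first is immediate from the definition of the effective entanglement as an infimum; the second is where the work lies, and it reduces to a single application of the Buscemi equivalence between the nonlocal-game ordering $\succeq$ and LOSR convertibility.

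For the upper bound, I would simply observe that by hypothesis $\bar{\rho}\in\spc{E}$, so $E(\bar{\rho})$ is one of the values over which the infimum $\bar{E}(\rho)=\inf\{E(\bar{\sigma}):\bar{\sigma}\in\spc{E}\}$ is taken; hence $\bar{E}(\rho)\le E(\bar{\rho})$. (This is the same observation already used in the text to conclude $\bar{E}(\rho)\le E(\rho)$ from $\rho\in\spc{E}$.)

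For the lower bound, I would take an arbitrary $\bar{\sigma}\in\spc{E}$ and show $E(\bar{\sigma})\ge E(\bar{\rho})$. By the definition of $\spc{E}$ we have $\pay{\bar{\sigma}}\ge\payr{\rho}$ for every semiquantum nonlocal game; combining this with the hypothesis $\pay{\bar{\rho}}=\payr{\rho}$ for every game gives $\pay{\bar{\sigma}}\ge\pay{\bar{\rho}}$ for every game, which is precisely the statement $\bar{\sigma}\succeq\bar{\rho}$. Invoking Eq.~\ref{eq:EntanglementGauge} (i.e.\ the Buscemi result that $\bar{\sigma}\succeq\bar{\rho}$ implies $\bar{\sigma}\mapsto\bar{\rho}$ by LOSR, together with monotonicity of $E$ under LOSR) then yields $E(\bar{\sigma})\ge E(\bar{\rho})$. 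Since $\bar{\sigma}\in\spc{E}$ was arbitrary, $E(\bar{\rho})$ is a lower bound of $\{E(\bar{\sigma}):\bar{\sigma}\in\spc{E}\}$, so $\bar{E}(\rho)\ge E(\bar{\rho})$. The two bounds give $\bar{E}(\rho)=E(\bar{\rho})$, and as a byproduct the infimum is attained at $\bar{\rho}$ without needing compactness or continuity hypotheses.

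I do not expect a serious obstacle here; the argument is essentially a one-line invocation of the entanglement gauge property once the right comparison is set up. The only point that needs care is the bookkeeping of inequality directions when passing through $\succeq$: one must use the hypothesis to convert the ``$\ge\payr{\rho}$'' condition defining membership in $\spc{E}$ into a direct game-by-game comparison between $\bar{\sigma}$ and $\bar{\rho}$, so that Eq.~\ref{eq:EntanglementGauge} applies verbatim with $\rho_1=\bar{\sigma}$ and $\rho_2=\bar{\rho}$.
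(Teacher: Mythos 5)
Your proof is correct and follows essentially the same route as the paper's: the lower bound via $\pay{\bar{\sigma}} \geq \payr{\rho} = \pay{\bar{\rho}}$, hence $\bar{\sigma}\succeq\bar{\rho}$ and Eq.~\eqref{eq:EntanglementGauge}, is exactly the paper's argument, and the upper bound from $\bar{\rho}\in\spc{E}$ is left implicit there but is the same observation. Your version is merely slightly more explicit in separating the two inequalities.
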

\begin{proof}
  To see this notice that for any $\bar{\sigma} \in \spc{E}$ we have that $\pay{\bar{\sigma}} \geq \payr{\rho} = \pay{\bar{\rho}}$ for all semiquantum nonlocal games. Therefore, $\bar{\sigma} \succeq \bar{\rho}$ and $E(\bar{\sigma}) \geq E(\bar{\rho})$ by the implication \eqref{eq:EntanglementGauge}. Therefore $\bar{E}(\rho) = E(\bar{\rho})$. 
\end{proof}
The theorem is saying that when there is a state $\bar{\rho}$ that is exactly as good for playing nonlocal games with unrestricted measurements as the state $\rho$ is with restricted measurements, this state can be considered to be the effective state that gives us the effective entanglement. Given that precisely equal performance in nonlocal games implies equal entanglement, this theorem gives a property that effective entanglement should naturally be expected to satisfy.

In the following section we will consider a particularly simple description of measurement restrictions in terms of completely positive maps, encompassing a large range of measurement errors found in modern quantum laboratories. We will provide the exact formulae for computing effective entanglement under such circumstances and show that effective G-concurrence, generalization of concurrence \cite{Gour05}, is proportional to the usual G-concurrence when measurement errors occur for only one of the observers. When they occur for both, the proportionality gives us a simple to compute upper bound. 

\section{Imperfections and restrictions via CPMs}\label{sec:ImperfectM}

To define the restricted set $\spc{R}$ of effective POVMs we are able to measure, we require that any POVM $\{\op{P}_k\}$ in $\spc{R}$ is the image of some other completely arbitrary and general POVM $\{\op{G}_k\}$ under the action of some local CPM $\$ = \$_A \otimes \$_B$ so that
\begin{align}
  \op{G}_k = \$_A^\dagger \otimes \$_B^\dagger[\op{P}_k]. \label{eq:POVMRelation}
\end{align}
Notice that this is equivalent to the CPM $\op{\$}$ acting on the state $\rho$, since
\begin{multline}
p_k = \Tr\left(\$^\dagger[\op{P}_k]\rho \right) = \Tr\left( \sum_j \op{K}_j \op{P}_k \op{K}_j^\dagger \rho \right)  \\ = \Tr\left( \op{P}_k \sum_j \op{K}_j^\dagger \rho \op{K}_j \right) = \Tr\left( \op{P}_k \$\left[\rho\right]\right), \label{eq:ImperfectStateProof}
\end{multline}
where $\op{K}_j$ are the Kraus operators associated with the CPM $\$_A \otimes \$_B$. The Eq.~\eqref{eq:POVMRelation} is thus the Heisenberg picture equivalent of the operation $\$_A \otimes \$_B$ acting on the state $\rho$. It is therefore unsurprising that the state $\$_A \otimes \$_B(\rho)$ is the effective state $\bar{\rho}$ that can be used to define effective entanglement. 

\begin{figure}[t]
\centering
\includegraphics[width=7cm]{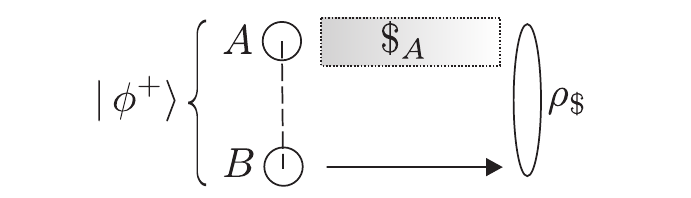}
\caption{(a) In this work we model an imperfect or restricted detector as an ideal detector augmented with a CPM channel $\$$ on its input. (b) A depiction of the state-channel isomorphism. Here a channel $\$$ acts on one part of the maximally entangled bipartite state $\ket{\phi^+}$. The resulting density matrix $\rho_\$$ is the state dual to the channel and completely characterises it. The map between the two is an isomorphism so it is possible to invert it and obtain the channel $\$$ back from the density matrix $\rho_\$$.}
\label{fig:ChannelStateDuality}
\end{figure}

The concept of using $\bar{\rho}$ to compute effective entanglement becomes particularly appealing when the entanglement measure $E$ used is G-concurrence in $d$ dimensions $G_d$. This measure is an entanglement monotone for all $d$ and reduces to concurrence when $d=2$ and is, in that case, therefore explicitly related to the entanglement of formation $E_F$ \cite{Gour05, Wootters98}. Furthermore, it is shown in \cite{Gour05} that all bipartite entanglement monotones can be written in terms of G-concurrence.  For arbitrary pure states it is given by $G_d(\ket{\psi}\bra{\psi}) = d \left(\det[A^\dagger A]\right)^{1/d}$ where $\ket{\psi} = \sum_{i,j} A_{i,j} \ket{i}\otimes\ket{j}$, while for mixed states in general it is given by the convex roof extension
\begin{align}
    G_d(\rho) = \inf\left\{ \sum_i p_i G_d\left(\ket{\psi_i}\bra{\psi_i}\right) \right\}, \label{eq:ConvexRoof}
\end{align}
where infimum is over all convex decompositions $\rho = \sum_i p_i \ket{\psi_i}\bra{\psi_i}$, with the states $\ket{\psi_i}$ not necessarily being orthogonal. Eq.~\eqref{eq:ConvexRoof} automatically ensures that the measure is convex, which is a very convenient property for an entanglement measure to posses \cite{Plenio07}. 

Recent results on entanglement evolution \cite{Konrad08, Tiersch08, Tiersch09, Gour10} under local CPMs give us a particularly simple expression for effective G-concurrence when the operation $\$ = \$_A \otimes \op{1}$ or $\$ = \op{1} \otimes \$_B$. In this case we have that
\begin{align}
  \bar{G}_d \left(\ket{\psi}\bra{\psi}\right) = Q(\$_A) G_d\left(\ket{\psi}\bra{\psi}\right),
\end{align}
where $0 \leq Q(\$_A) \leq 1$ is a quality factor given by $Q(\$_A) = G_d\bigl[(\$_A \otimes \op{1})(\ket{\phi_d}\bra{\phi_d})\bigr]$ and $\ket{\phi_d} = \sum_{k=0}^{d-1} \ket{kk}/\sqrt{d}$ is a maximally entangled state in $d\times d$ dimensional Hilbert space. From convexity of $G_d$ it follows that
\begin{align}
  \bar{G}_d \left(\rho\right) & \leq Q(\$_A) G_d\left(\rho\right).
\end{align}
Similarly, when $\$ = \$_A \otimes \$_B$ and also due to convexity,
\begin{align}
  \bar{G}_d \left(\rho\right) \leq Q(\$_A)Q(\$_B) G_d\left(\rho\right).
\end{align}
Thus we have a simple linear relationship between the effective G-concurrence and the conventional G-concurrence, with the proportionality factor given by the quality $Q$, which is a function of the restriction $\$$ only. The density matrix $\rho_{\$} = \op{1} \otimes \$_B [\ket{\phi_d}\bra{\phi_d}]$, shown in Fig.~\ref{fig:ChannelStateDuality}(b) follows from the state-channel isomorphism~\cite{Hein05, NielsenChuang, GeometryQuantum} and is known to completely characterise the channel $\$$. An extreme case is where $\rho_{\$}$ is completely separable so $Q(\$)=0$. Such a channel is said to be entanglement breaking \cite{Hein05} since the output state of the channel is separable for any input state. In this context it implies that effective entanglement vanishes. The linearity makes the effective entanglement, or its upper bound, particularly simple to compute when G-concurrence is used as the measure of entanglement, and we do so analytically for several examples we expect appear particularly often in laboratory settings. 

\section{Entanglement of indistinguishable particles}\label{sec:Indistinguishable}
The underlying physical framework for entanglement is a system composed of two or more individually addressable degrees of freedom that together form a tensor product Hilbert space where the state of the complete system is described. The archetypal case is $\mathbbm{C}^2 \otimes \mathbbm{C}^2$ for two qubits, which is usually envisaged as arising from two localized particles with spin-$\frac{1}{2}$. Conceptually entanglement is then signalled by a lack of separability of the state of the system with respect to this tensor product structure. However, if particles are delocalized and indistinguishable this raises a significant issues for quantifying entanglement because the relevant degrees of freedom can no longer be assigned to individual particles. Instead an analysis of entanglement requires a description in terms of the second quantized field modes which the particles can occupy. The most elementary example of this problem consists of a single-particle delocalized across two distinct spatial modes $\op{a}$ and $\op{b}$ yielding a state~\cite{Tan91,Hardy94, Peres95}
\begin{align}
\ket{\Psi^\pm} = \frac{1}{\sqrt{2}}(\op{a}^\dagger + \op{b}^\dagger)\ket{\textrm{vac}} = \frac{1}{\sqrt{2}}\left(\ket{0}_a\ket{1}_b \pm \ket{1}_a\ket{0}_b\right) \nonumber,
\end{align}
where $\ket{n}_a\ket{m}_b \propto (\op{a}^\dagger)^n(\op{b}^\dagger)^m\ket{\textrm{vac}}$. Interpreting the Fock states $\ket{0}$ and $\ket{1}$ of either mode as Pauli $\sigma_z$ eigenstates of a qubit suggests that the second quantized form $\ket{\Psi^\pm}$ it is an entangled state. Yet this notion that a single-particle can truly exhibit entanglement has raised considerable controversy~\cite{Greenberger95,Vaidman95,Pawlowski06,Drezet06,vanEnk05}. This issue is quite naturally analysed within the framework of effective entanglement since the accessibility of any correlations in states such as $\ket{\Psi^\pm}$ is fundamentally linked to what measurements are available. Indeed a central issue is whether full discrimination of the Bell basis is possible, namely if $\ket{\Psi^\pm}$ and
\begin{align}
\ket{\Phi^\pm} = \frac{1}{\sqrt{2}}(1+ \op{a}^\dagger \op{b}^\dagger)\ket{\textrm{vac}} = \frac{1}{\sqrt{2}}\left(\ket{0}_a\ket{0}_b \pm \ket{1}_a\ket{1}_b\right), \nonumber
\end{align}
can be measured. This is essential for such states to be a resource in protocols like teleportation \cite{Lombardi02,Lutkenhaus99}. Recently the Bell discrimination has been shown to be possible with photons when non-linear optics are used in combination with a two level atom \cite{Bjork12}. In the following we study single-particle entanglement for the case of photons with imperfect detectors and for massive particles where a super-selection rule physically restricts measurements.

\subsection{Optical amplitude and phase damping} 
The case of photons provides a simple test ground for effective entanglement. Suppose that the state $\ket{\Psi^\pm}$ was used within an LOCC protocol where Alice implements imperfect measurements of her optical mode $\op{a}$. With a photon counter she can measure $\op{a}^\dagger\op{a}$, or via a balanced homodyne detector with a local oscillator with a phase $\phi$ she can measure a field quadrature $X(\phi) = (\op{a} \,e^{-i\phi} + \op{a}^\dagger e^{-i\phi})/2$. Imperfections in photon counting might for example cause the detector not to `click'  due to photon loss within the device. This is amplitude damping and can be modelled as a beam-splitter at the input port of a perfect counter which scatters an incoming photon into another unmonitored optical mode, as shown in Fig.~\ref{fig:photon_detect}(a). Imperfections in the measurements of field quadratures might arise due to uncertainty in the phase $\phi$. This is phase damping and can be modelled by a local oscillator subject to phase fluctuations, as shown in as shown in Fig.~\ref{fig:photon_detect}(b). For both types of measurements their errors, within the subspace where no more than one photon occupies the local mode, are described by a CPM of the form $\$[\rho] = \op{E}_0 \rho \op{E}_0^\dagger + \op{E}_1 \rho \op{E}_1^\dagger$. Amplitude damping has $\op{E}_0 = \ket{0}\bra{0} + \sqrt{1-\gamma}\ket{1}\bra{1}$ and $\op{E}_1 = \sqrt{\gamma} \ket{0}\bra{1}$, with a photon loss rate $\gamma$. Phase damping has $\op{E}_0 = \ket{0}\bra{0} + \sqrt{1-\lambda}\ket{1}\bra{1}$ and $\op{E}_1 = \sqrt{\lambda} \ket{1}\bra{1}$, with phase flipping elastic photon scattering occurring at a rate $(1-\sqrt{1-\lambda})/2$. In either case we can characterise the imperfect measurements via the state-channel isomorphism using $\ket{\Psi^\pm}$ by computing $\rho_\$ = (\$ \otimes \op{1})\ket{\Psi^\pm}\bra{\Psi^\pm}$. The concurrence of $\rho_\$$ then gives the proportionality factor $Q[\$]$ between effective concurrence and the standard concurrence of any pure state as $Q[\$] =  \sqrt{1-\gamma}$ and $Q[\$] =  \sqrt{1-\lambda}$ for amplitude and phase damping, respectively. As expected these imperfections monotonically erode the effective entanglement accessible within a state such as $\ket{\Psi^\pm}$.

\begin{figure}[t]
\centering
\includegraphics[width=7cm]{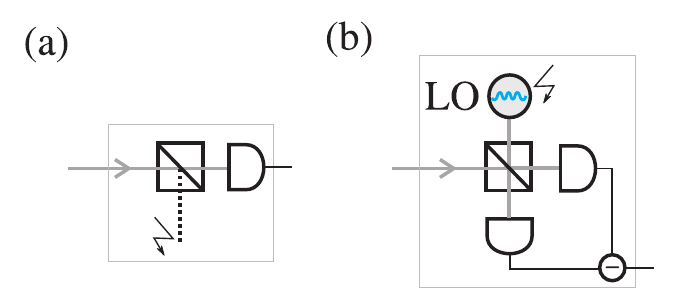}
\caption{(a) A photon counter for the input mode suffering from amplitude damping noise of rate $\gamma$. The photon loss within the device is modelled by a beam-splitter with reflectivity $\gamma$. (b) A balanced homodyne device for measuring the field quadrature $X(\phi)$ of the input mode, where $\phi$ is controlled by the phase of the local oscillator (LO). Phase fluctuations of the LO cause phase damping noise in the device with a rate $\lambda$. The relation between the phase fluctuations and $\lambda$ is identical to those discussed later in Sec.~\ref{sec:massive}.}
\label{fig:photon_detect}
\end{figure}

\subsection{Massive particles subject to super-selection rules} \label{sec:massive}
If the modes $\op{a}$ and $\op{b}$ correspond to those of a massive particle then, in contrast to photons, super-selection rules impose a fundamental physical restriction on both the operations and measurements that can be performed. Specifically, massive particles are subject to Bargmann's super-selection rule for non-relativistic quantum mechanics \cite{Bargmann54, Cisneros98} which prohibits superpositions of states of different mass like those seen in the Bell states $\ket{\Phi^\pm}$. Strict adherence to this super-selection rule also means that the only permissible local measurements are those which commute with the number operator $\op{a}^\dagger\op{a}$ meaning also that measurements of local superpositions between $\ket{0}$ and $\ket{1}$ are inaccessible. With these restrictions one might reasonably question whether a single massive particle delocalized over two spatial regions in a state $\ket{\Psi^\pm}$ is really entangled. This issue has been hotly debated recently \cite{Ashhab09, Heaney09, Cunha07}.

In fact if a broken-symmetry reference frame is present it can partially or fully lift the super-selection rule \cite{Bartlett07, Paterek11, Vaccaro08} allowing coherences to be measured. This has sparked investigations into both the quantification of entanglement in such scenarios \cite{Bartlett06-1, Bartlett03-1, Schuch04} as well as the potential presence of single-particle quantum nonlocality \cite{Schuch04, Heaney10, Ashhab07-1}. As a result incorporating full or partial super-selection rule measurement restrictions into the quantification of entanglement is a fundamental requirement for building a meaningful entanglement measure for such systems \cite{Wiseman03, Jones06}. The framework of effective entanglement introduced here provides an exemplary tool in this context. Specifically, while the effective entanglement in the extreme cases of no restrictions and complete adherence to super-selection rules is currently understood, the intermediate cases permitted by a general broken-symmetry reference frame are not. 

In the case where reference frames are available which fully lift the super-selection rule restrictions then standard entanglement measures are sufficient. In the opposite case where no reference frames are present, the super-selection rule restrictions are described by a CPM, just like in the previous section. Since no coherences between different particle number sectors can be measured, the CPM must remove them from the measurement POVMs. Such CPM would take the form
\begin{align}
    \$\left[\rho\right] = \sum_n \Pi_n \rho \Pi_n,
\end{align}
where $\Pi_n$ is the projector onto the subspace of $n$ particles and $\rho$ is some state with fixed total number of particles. It is therefore sufficient for the above CPM to act on only one of the party's subsystems. If both Alice and Bob are affected by the same restriction, then we can get the effective state $\$[\rho] = \bar{\rho} = \bigoplus_n p_n \rho_n$ where the direct sum $\oplus$ signifies that it has block-diagonal form where $\rho_n = \Pi_n \rho \Pi_n$ and $p_n = \Tr\left[\Pi_n \rho \right]$. Thus, super-selection rule restricted effective entanglement measure for pure $\rho$ is given by
\begin{align}
	E(\bar{\rho}) = \sum_n p_n E(\rho_n),
\end{align}
where $E$ entanglement of formation. For a mixed state the above forms an upper bound. In fact this measure of entanglement for indistinguishable particles was introduced already by Wiseman and Vaccaro ~\cite{Wiseman03}. Here we note that the framework of effective entanglement has led naturally to the same result. 

\begin{figure}[t]
\centering
\includegraphics[width=8cm]{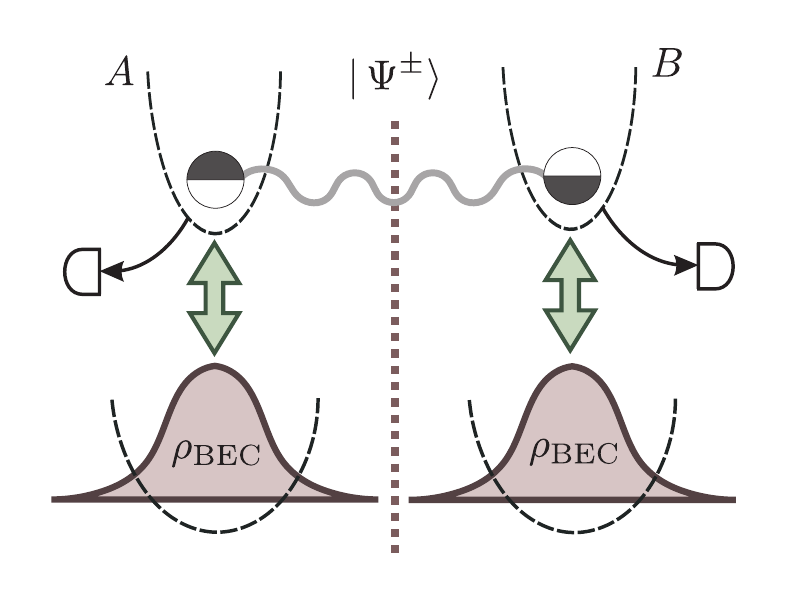}
\caption{Two parties Alice $(A)$ and Bob $(B)$ share a single-particle entangled state $\ket{\Psi^\pm}$ whose superposition between two tight potentials is depicted as the half-filled circles. For her part of in an LOCC protocol $\Lambda$ Alice is required to measure a superposition of the particle number states $\ket{0}$ and $\ket{1}$ of her local mode $\hat{a}$. This coherent rotation is achieved by interacting her mode with an ancilla mode $\hat{c}$ which form a local BEC reference frame for a certain amount of time. She then measures the occupation of her local mode with an ideal detector. If the BEC has a large occupation and well defined phase then any rotation can be performed perfectly, breaking the super-selection rule restriction. In contrast if the local BEC has a completely uncertain phase then the rotation is completely incoherent and the super-selection rule restriction remains.}
\label{fig:BECDrawing}
\end{figure}

A cold-atom inspired setup provides a concrete example for exploring the general case in between these limits. Specifically, the resource state $\ket{\Psi^\pm}$ now describes the state of a single atom in a superposition over two tightly confined potentials, as implemented by two atomic quantum dots where the repulsive interactions between atoms is sufficiently large to prohibit double occupancy \cite{Bruderer06}. To exploit the resource state in some LOCC protocol $\Lambda$ Alice may be required to measure a super-selection rule violating superposition of the particle number states $\ket{0}$ and $\ket{1}$ of her local mode $\hat{a}$. She therefore needs access to a unitary transformation of the mode equivalent to the single-qubit rotation about the $x$-axis $\hat{R}_x(\theta) = \exp(-i\theta\sigma_x/2)$, where $\sigma_x$ is the Pauli $x$ operator\footnote{Note that the mode equivalent of a single-qubit rotation about the $z$-axis $\hat{R}(\vartheta) = \exp(-i\vartheta\sigma_z/2)$ can be trivially implemented via evolution of the mode with a Hamiltonian of the form $\kappa\op{a}^\dagger\op{a}$.}. To implement such a rotation she exploits a local reference frame composed of an ancilla mode $\op{c}$ in a Bose Einstein condensate (BEC) described by a mixture of coherent states 
\begin{align}
   \rho_{\textrm{BEC}} = \int_0^{2\pi} d\phi p(\phi) \ket{|\alpha| e^{i \phi}}\bra{|\alpha| e^{i \phi}}, \label{eq:rho_bec}
\end{align}
where $\ket{\alpha} = \exp(-|\alpha|^2/2) \sum_{n=0}^\infty (\alpha \op{c}^\dagger)^n\ket{\textrm{vac}}/n!$ with $\alpha$ a complex number and $p(\phi)$ is the phase distribution.  Owing to the global phase being unobservable the phase distribution $p(\phi)$ and its translation $p(\phi+\phi_0)$ create physically equivalent states $\rho_{\textrm{BEC}}$. Otherwise, depending on the structure of $p(\phi)$ the reference frame state $\rho_{\textrm{BEC}}$ either breaks or adheres to the number symmetry underlying the super-selection rule.   

The execution of the $R_x(\theta)$ rotation is then attempted by Alice jointly evolve her local mode $\op{a}$ and the ancilla mode $\op{c}$ via the Hamiltonian $\op{H} = -\frac{1}{2}\Omega \left(\op{a}^\dagger \op{c} + \op{c}^\dagger \op{a}\right)$ for a given time $t$. This number-symmetric interaction drives an exchange of particles between the BEC reservoir and the resource state, after which the BEC reservoir is traced out. Such a measurement is depicted in Fig. \ref{fig:BECDrawing}. The effect of this evolution is best revealed by analysing one coherent state $\ket{|\alpha| e^{i \phi}}$ in the mixture $\rho_{\textrm{BEC}}$. In the limit $|\alpha|^2 \gg 1$ this gives\footnote{We write $\rightarrow$ instead of $\approx$ since exact evolution converges extremely rapidly as a function of $|\alpha|^2$ to the product form mapping shown.}
\begin{eqnarray}
	\ket{0}\ket{|\alpha| e^{i \phi}}  &\rightarrow& \left(\cos(\omega t) \ket{0} -i e^{i\phi}\sin(\omega t) \ket{1}\right)\ket{|\alpha| e^{i \phi}}, \nonumber \\
	\ket{1}\ket{|\alpha| e^{i \phi}}  &\rightarrow& \left(\cos(\omega t) \ket{1} -i e^{i\phi}\sin(\omega t) \ket{0}\right)\ket{|\alpha| e^{i \phi}} \nonumber,
\end{eqnarray}
where $\omega = \frac{1}{2}\Omega |\alpha|$. This represents a unitary evolution of the mode equivalent to the sequence of rotations $\hat{R}_z(\phi) \hat{R}_x(\omega t)$. Tracing the BEC out then yields an effective evolution of the mode as
 \begin{align}
   \Gamma[\rho] = \int_0^{2\pi} d\phi p(\phi) \hat{R}_z(\phi) \hat{R}_x(\omega t) \rho \hat{R}^\dagger_x(\omega t) \hat{R}^\dagger_z(\phi).
\end{align}
This is equivalent to applying the unitary $\hat{R}_x(\omega t)$ to the input state as $\rho_R = \hat{R}_x(\omega t) \rho \hat{R}^\dagger_x(\omega t)$ followed by a phase-damping channel so
\begin{align}
	\Gamma[\rho] = (1-|g|) \left(\proj{P}_0 \rho_R \proj{P}_0 + \proj{P}_1 \rho_R \proj{P}_1\right) + |g| \rho_R,
\end{align}
where $g = -i \int_0^{2\pi} d\phi p(\phi) \exp(i \phi)$, $\proj{P}_0 = \ket{0}\bra{0}$ and $\proj{P}_1 = \ket{1}\bra{1}$. 

The CPM $\Gamma$ describing this measurement imperfection is not yet in the model form we restricted to in Sec.~\ref{sec:ImperfectM}. Instead we identify the necessary map $\$$ as
\begin{align}
	\$[\rho] = \hat{R}^\dagger_x(\omega t)\Gamma[\rho]\hat{R}_x(\omega t).
\end{align}
The CPM factor in the case is $Q[\$] = |g|$, identical to a phase damping channel with a rate $\sqrt{1-\lambda} = |g|$, since the rotations $\hat{R}_x(\omega t)$ have no influence on entanglement. Thus $|g|=0$ is a sufficient condition for vanishing effective entanglement and for pure input states it is also necessary.

If the BEC possesses a well defined phase $\phi_0$ so $p(\phi) = 2\pi \delta(\phi - \phi_0)$ then $|g|=1$ and $\Gamma$ reduces to the unitary $\hat{R}_z(\phi_0) \hat{R}_x(\omega t)$ and Alice succeeds in rotating the state of mode $\op{a}$ into the desired pure coherent superposition of $\ket{0}$ and $\ket{1}$. In this case the super-selection rule is fully lifted and there are no restrictions on what can be measured \cite{Bartlett07, Heaney09-1}. This highlights that the BEC plays the role of a perfect local oscillator analogous to homodyne detection of field quadratures. In contrast, a completely uncertain phase $p(\phi) = (1/2\pi)$ gives $|g| = 0$ so the attempted rotation can only generate statistical mixtures of $\ket{0}$ and $\ket{1}$ in strict adherence to the super-selection rule. 

\begin{figure}[t]
  \centering
  \includegraphics[width=7cm]{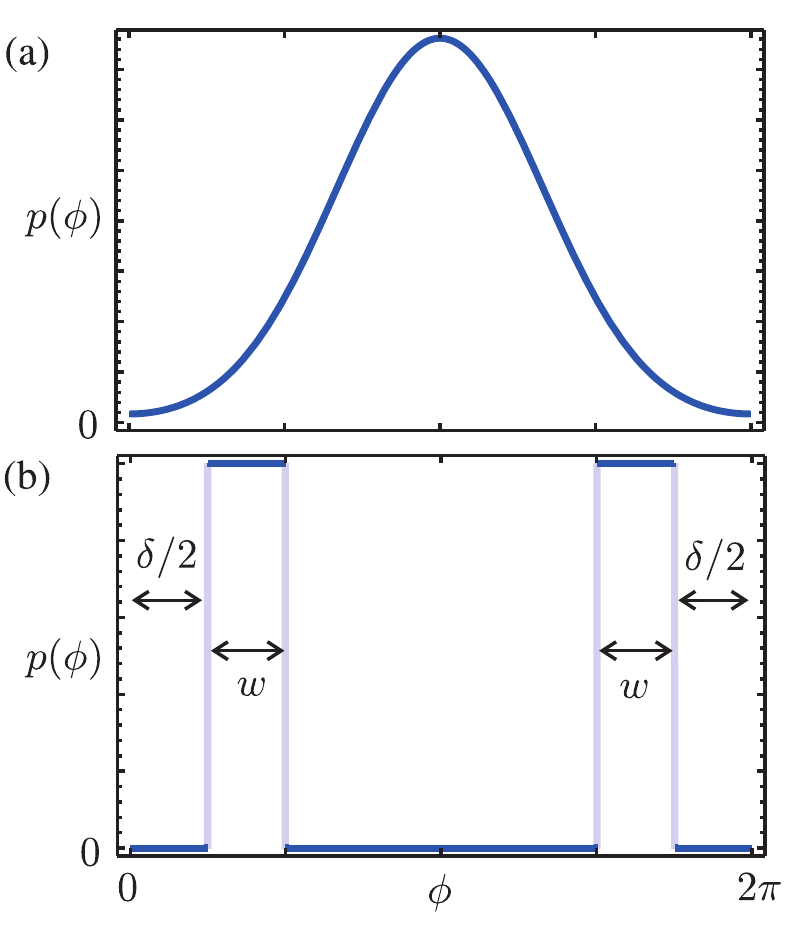}
  \caption{Two different examples of incoherent phase distributions $p(\phi)$ of the BEC reference frame. In (a) a wrapped normal distribution is shown which results in $|g|=\exp(-\sigma^2/2)$, where $\sigma$ is variance. In (b) sectionally constant distribution is shown which gives $|g| = (4w/\pi) \sin(w/2)\cos(w/2+\delta/2)$, where $w,\delta$ are as shown.}
  \label{fig:BECGraphs}
\end{figure}

For general distributions $p(\phi)$ allow $|g|$ to vary between these limiting cases yield a partial lifting of the super-selection rule restrictions. In this case it is not possible to coherently evolve into any superposition of $\ket{0}$ and $\ket{1}$ without at least some degree of mixing determined by $|g|$ Since $g$ is the average value of $\exp(i \phi)$, its absolute value is a direct measure of the amount of the reliability of the BEC as a phase reference. To illustrate this we computed $|g|$ for a Gaussian distribution wrapped around a circle
\begin{align}
  p(\phi) = \frac{1}{\sigma \sqrt{2\pi}} \sum_{k=-\infty}^{\infty} \exp\left[ \frac{-(\phi-\mu+ 2\pi k)^2}{2\sigma^2}\right],
\end{align}
as shown in Fig. \ref{fig:BECGraphs}(a). We get $|g| = \exp(-\sigma^2/2)$ by integrating its uniformly convergent series over the interval $\phi \in [0,2\pi]$ term by term. As expected, effective entanglement decreases with increasing phase uncertainty $\sigma$. We also considered the sum of two flat distributions of width $w$ on the circle with centres shifted from $\phi = 0$ and $\phi = 2\pi$ by $\delta/2$, as depicted in Fig.~\ref{fig:BECGraphs}(b). Here we find that as long as there are no overlaps $|g| = (4w/\pi) \sin(w/2)\cos(\delta/2+w/2)$. 

The phase of the BEC is the crucial property allowing it to act as a super-selection rule breaking reference frame. Mixing a BEC of phase $\phi$ with one of phase $\phi+\pi$, by forming a state such as
\begin{align}
  	\rho_{\textrm{BEC}} \propto \uket{|\alpha| e^{i \phi}}\ubra{|\alpha| e^{i \phi}}+  \uket{|\alpha| e^{i (\phi+\pi)}}\ubra{|\alpha| e^{i (\phi+\pi)}}, \nonumber
\end{align}
gives $|g|=0$. However, $\rho_{\textrm{BEC}}$ itself still contains coherences which violate the super-selection rule. This shows that while symmetry breaking is a necessary condition for having non-vanishing effective entanglement, it is not sufficient. 

\section{Conclusion}\label{sec:Conclusion}
We have introduced an effective entanglement functional, measuring the minimum amount of entanglement needed to perform semiquantum nonlocal games with perfect measurements at least as well as with the imperfect measurements. The semiquantum nonlocal games posses properties that make them ideal as a gauge of the amount of entanglement in a state, the property we make extensive use of. We showed that whenever we can describe the restrictions through one-sided CPMs, an exact result can be obtained where the effective G-concurrence is proportional to the conventional G-concurrence, with proportionality coefficient given by the CPM dependent quality factor. For two-sided CPMs and mixed states the expression gives an upper bound. Although we have only dealt with bipartite entanglement measures in this paper, multipartite measures could similarly be treated using analogous results for multipartite entanglement evolution given in \cite{Gour10}, again obtaining an emergent quality factor. 

We should note that although we used semiquantum nonlocal games as entanglement gauge here, other entanglement gauges could equally be applicable. The crucial property that enabled us to use semiquantum nonlocal games is that whenever the maximum average payoff function $\pay{\rho} \geq \pay{\sigma}$ for all semiquantum nonlocal games we also have that $E(\rho) \geq E(\sigma)$. Any set of quantum protocols where there is a POVM dependent fidelity such that $F(\rho) \geq F(\sigma)$ for all protocols in the set $\Rightarrow E(\rho) \geq E(\sigma)$ would provide equal results. Conversely, if higher entanglement $E(\rho) \geq E(\sigma)$ implies higher POVM dependent fidelity $F(\rho) \geq F(\sigma)$ for some set of protocols, then entanglement is a resource for these protocols. When measurements are restricted, the resource naturally becomes the effective entanglement defined in this paper $\bar{E}(\rho)$. 

We applied this framework to describe single-particle entanglement. For the case of photons we showed how effective entanglement is attenuated by common measurement noise like amplitude and phase damping. For massive particles we considered the fundamental restrictions imposed by super-selection rules. Computing effective entanglement for strict adherence of the rule we found the entanglement of particles \cite{Wiseman03}. We further extended this work by considering measurements that can utiliize a BEC with phase uncertainty as a reservoir to partially lift the super-selection rule restriction. The multiplicative factor for concurrence was found explicitly, from which we deduce that whenever the BEC has an undefined phase effective entangled vanishes implying that the same protocol could be performed without entanglement.

\acknowledgments
SM would like to thank Stephen R. Clark and Dieter Jaksch for many helpful discussions and the Engineering and Physical Sciences Research Council (EPSRC) for financial support.
\bibliography{Entanglement}

\begin{thebibliography}{51}
\expandafter\ifx\csname natexlab\endcsname\relax\def\natexlab#1{#1}\fi
\expandafter\ifx\csname bibnamefont\endcsname\relax
  \def\bibnamefont#1{#1}\fi
\expandafter\ifx\csname bibfnamefont\endcsname\relax
  \def\bibfnamefont#1{#1}\fi
\expandafter\ifx\csname citenamefont\endcsname\relax
  \def\citenamefont#1{#1}\fi
\expandafter\ifx\csname url\endcsname\relax
  \def\url#1{\texttt{#1}}\fi
\expandafter\ifx\csname urlprefix\endcsname\relax\def\urlprefix{URL }\fi
\providecommand{\bibinfo}[2]{#2}
\providecommand{\eprint}[2][]{\url{#2}}

\bibitem[{\citenamefont{Borisovich~Braginski
  et~al.}(1992)\citenamefont{Borisovich~Braginski, Khalili, and
  Thorne}}]{Braginski92}
\bibinfo{author}{\bibfnamefont{V.}~\bibnamefont{Borisovich~Braginski}},
  \bibinfo{author}{\bibfnamefont{F.~Y.} \bibnamefont{Khalili}},
  \bibnamefont{and} \bibinfo{author}{\bibfnamefont{K.~S.}
  \bibnamefont{Thorne}}, \emph{\bibinfo{title}{{Quantum measurement}}}
  (\bibinfo{publisher}{{Cambridge University Press}}, \bibinfo{year}{1992}).

\bibitem[{\citenamefont{Lundeen et~al.}(2009)\citenamefont{Lundeen, Feito,
  Coldenstrodt-Ronge, Pregnell, Silberhorn, Ralph, Eisert, Plenio, and
  Walmsley}}]{Lundeen09}
\bibinfo{author}{\bibfnamefont{J.~S.} \bibnamefont{Lundeen}},
  \bibinfo{author}{\bibfnamefont{A.}~\bibnamefont{Feito}},
  \bibinfo{author}{\bibfnamefont{H.}~\bibnamefont{Coldenstrodt-Ronge}},
  \bibinfo{author}{\bibfnamefont{K.~L.} \bibnamefont{Pregnell}},
  \bibinfo{author}{\bibfnamefont{C.}~\bibnamefont{Silberhorn}},
  \bibinfo{author}{\bibfnamefont{T.~C.} \bibnamefont{Ralph}},
  \bibinfo{author}{\bibfnamefont{J.}~\bibnamefont{Eisert}},
  \bibinfo{author}{\bibfnamefont{M.~B.} \bibnamefont{Plenio}},
  \bibnamefont{and} \bibinfo{author}{\bibfnamefont{I.~A.}
  \bibnamefont{Walmsley}}, \bibinfo{journal}{{Nature Physics}}
  \textbf{\bibinfo{volume}{5}}, \bibinfo{pages}{27} (\bibinfo{year}{2009}).

\bibitem[{\citenamefont{Feito et~al.}(2009)\citenamefont{Feito, Lundeen,
  Coldenstrodt-Ronge, Eisert, Plenio, and Walmsley}}]{Feito09}
\bibinfo{author}{\bibfnamefont{A.}~\bibnamefont{Feito}},
  \bibinfo{author}{\bibfnamefont{J.}~\bibnamefont{Lundeen}},
  \bibinfo{author}{\bibfnamefont{H.}~\bibnamefont{Coldenstrodt-Ronge}},
  \bibinfo{author}{\bibfnamefont{J.}~\bibnamefont{Eisert}},
  \bibinfo{author}{\bibfnamefont{M.~B.} \bibnamefont{Plenio}},
  \bibnamefont{and} \bibinfo{author}{\bibfnamefont{I.~A.}
  \bibnamefont{Walmsley}}, \bibinfo{journal}{{New Journal of Physics}}
  \textbf{\bibinfo{volume}{11}}, \bibinfo{pages}{093038}
  (\bibinfo{year}{2009}), \eprint{{quant-ph/9610001}}.

\bibitem[{\citenamefont{Gisin and Thew}(2010)}]{Gisin10}
\bibinfo{author}{\bibfnamefont{N.}~\bibnamefont{Gisin}} \bibnamefont{and}
  \bibinfo{author}{\bibfnamefont{R.}~\bibnamefont{Thew}},
  \bibinfo{journal}{{Electronics Letters}} \textbf{\bibinfo{volume}{46}},
  \bibinfo{pages}{965} (\bibinfo{year}{2010}).

\bibitem[{\citenamefont{Kimble}(2008)}]{Kimble08}
\bibinfo{author}{\bibfnamefont{H.~J.} \bibnamefont{Kimble}},
  \bibinfo{journal}{{Nature}} \textbf{\bibinfo{volume}{453}},
  \bibinfo{pages}{1023} (\bibinfo{year}{2008}).

\bibitem[{\citenamefont{Gisin and Thew}(2007)}]{Gisin07}
\bibinfo{author}{\bibfnamefont{N.}~\bibnamefont{Gisin}} \bibnamefont{and}
  \bibinfo{author}{\bibfnamefont{R.}~\bibnamefont{Thew}},
  \bibinfo{journal}{{Nature Photonics}} \textbf{\bibinfo{volume}{1}},
  \bibinfo{pages}{165} (\bibinfo{year}{2007}).

\bibitem[{\citenamefont{Datta et~al.}(2011)\citenamefont{Datta, Zhang,
  Thomas-Peter, Dorner, Smith, and Walmsley}}]{Datta11-1}
\bibinfo{author}{\bibfnamefont{A.}~\bibnamefont{Datta}},
  \bibinfo{author}{\bibfnamefont{L.}~\bibnamefont{Zhang}},
  \bibinfo{author}{\bibfnamefont{N.}~\bibnamefont{Thomas-Peter}},
  \bibinfo{author}{\bibfnamefont{U.}~\bibnamefont{Dorner}},
  \bibinfo{author}{\bibfnamefont{B.~J.} \bibnamefont{Smith}}, \bibnamefont{and}
  \bibinfo{author}{\bibfnamefont{I.~A.} \bibnamefont{Walmsley}},
  \bibinfo{journal}{{Physical Review A}} \textbf{\bibinfo{volume}{83}},
  \bibinfo{pages}{063836} (\bibinfo{year}{2011}).

\bibitem[{\citenamefont{Hamilton et~al.}(2009)\citenamefont{Hamilton,
  Lavi{\v{c}}ka, Andersson, Jeffers, and Jex}}]{Hamilton09}
\bibinfo{author}{\bibfnamefont{C.}~\bibnamefont{Hamilton}},
  \bibinfo{author}{\bibfnamefont{H.}~\bibnamefont{Lavi{\v{c}}ka}},
  \bibinfo{author}{\bibfnamefont{E.}~\bibnamefont{Andersson}},
  \bibinfo{author}{\bibfnamefont{J.}~\bibnamefont{Jeffers}}, \bibnamefont{and}
  \bibinfo{author}{\bibfnamefont{I.}~\bibnamefont{Jex}},
  \bibinfo{journal}{{Physical Review A}} \textbf{\bibinfo{volume}{79}},
  \bibinfo{pages}{023808} (\bibinfo{year}{2009}).

\bibitem[{\citenamefont{Horodecki et~al.}(2009)\citenamefont{Horodecki,
  Horodecki, Horodecki, and Horodecki}}]{Horodecki09}
\bibinfo{author}{\bibfnamefont{R.}~\bibnamefont{Horodecki}},
  \bibinfo{author}{\bibfnamefont{P.}~\bibnamefont{Horodecki}},
  \bibinfo{author}{\bibfnamefont{M.}~\bibnamefont{Horodecki}},
  \bibnamefont{and}
  \bibinfo{author}{\bibfnamefont{K.}~\bibnamefont{Horodecki}},
  \bibinfo{journal}{{Reviews of Modern Physics}} \textbf{\bibinfo{volume}{81}},
  \bibinfo{pages}{865} (\bibinfo{year}{2009}).

\bibitem[{\citenamefont{Plenio and Virmani}(2007)}]{Plenio07}
\bibinfo{author}{\bibfnamefont{M.~B.} \bibnamefont{Plenio}} \bibnamefont{and}
  \bibinfo{author}{\bibfnamefont{S.}~\bibnamefont{Virmani}},
  \bibinfo{journal}{{Quantum Information and Computation}}
  \textbf{\bibinfo{volume}{7}}, \bibinfo{pages}{1} (\bibinfo{year}{2007}).

\bibitem[{\citenamefont{Bartlett et~al.}(2007)\citenamefont{Bartlett, Rudolph,
  and Spekkens}}]{Bartlett07}
\bibinfo{author}{\bibfnamefont{S.}~\bibnamefont{Bartlett}},
  \bibinfo{author}{\bibfnamefont{T.}~\bibnamefont{Rudolph}}, \bibnamefont{and}
  \bibinfo{author}{\bibfnamefont{R.}~\bibnamefont{Spekkens}},
  \bibinfo{journal}{{Reviews of Modern Physics}} \textbf{\bibinfo{volume}{79}},
  \bibinfo{pages}{555} (\bibinfo{year}{2007}).

\bibitem[{\citenamefont{Buscemi}(2012)}]{Buscemi12}
\bibinfo{author}{\bibfnamefont{F.}~\bibnamefont{Buscemi}},
  \bibinfo{journal}{{Physical Review Letters}} \textbf{\bibinfo{volume}{108}},
  \bibinfo{pages}{200401} (\bibinfo{year}{2012}).

\bibitem[{\citenamefont{Choi}(1975)}]{Choi75}
\bibinfo{author}{\bibfnamefont{M.}~\bibnamefont{Choi}},
  \bibinfo{journal}{{Linear Algebra and its Applications}}
  \textbf{\bibinfo{volume}{10}}, \bibinfo{pages}{285} (\bibinfo{year}{1975}).

\bibitem[{\citenamefont{Wootters}(1998)}]{Wootters98}
\bibinfo{author}{\bibfnamefont{W.~K.} \bibnamefont{Wootters}},
  \bibinfo{journal}{{Physical Review Letters}} \textbf{\bibinfo{volume}{80}},
  \bibinfo{pages}{2245} (\bibinfo{year}{1998}).

\bibitem[{\citenamefont{Gour}(2005)}]{Gour05}
\bibinfo{author}{\bibfnamefont{G.}~\bibnamefont{Gour}},
  \bibinfo{journal}{{Physical Review A}} \textbf{\bibinfo{volume}{71}},
  \bibinfo{pages}{012318} (\bibinfo{year}{2005}).

\bibitem[{\citenamefont{Tan et~al.}(1991)\citenamefont{Tan, Walls, and
  Collett}}]{Tan91}
\bibinfo{author}{\bibfnamefont{S.~M.} \bibnamefont{Tan}},
  \bibinfo{author}{\bibfnamefont{D.~F.} \bibnamefont{Walls}}, \bibnamefont{and}
  \bibinfo{author}{\bibfnamefont{M.~J.} \bibnamefont{Collett}},
  \bibinfo{journal}{{Physical Review Letters}} \textbf{\bibinfo{volume}{66}},
  \bibinfo{pages}{252} (\bibinfo{year}{1991}).

\bibitem[{\citenamefont{Hardy}(1994)}]{Hardy94}
\bibinfo{author}{\bibfnamefont{L.}~\bibnamefont{Hardy}},
  \bibinfo{journal}{{Physical Review Letters}} \textbf{\bibinfo{volume}{73}},
  \bibinfo{pages}{2279} (\bibinfo{year}{1994}).

\bibitem[{\citenamefont{Peres}(1995)}]{Peres95}
\bibinfo{author}{\bibfnamefont{A.}~\bibnamefont{Peres}},
  \bibinfo{journal}{{Physical Review Letters}} \textbf{\bibinfo{volume}{74}},
  \bibinfo{pages}{4571} (\bibinfo{year}{1995}).

\bibitem[{\citenamefont{Carothers}(2000)}]{CarothersAnalysis}
\bibinfo{author}{\bibfnamefont{N.~L.} \bibnamefont{Carothers}},
  \emph{\bibinfo{title}{{Real Analysis}}} (\bibinfo{publisher}{{Cambridge
  University Press}}, \bibinfo{address}{{Cambridge}}, \bibinfo{year}{2000}).

\bibitem[{\citenamefont{Rudin}(1976)}]{RudinAnalysis}
\bibinfo{author}{\bibfnamefont{W.}~\bibnamefont{Rudin}},
  \emph{\bibinfo{title}{{Principles of Mathematical Analysis}}}, {International
  Series in Pure \& Applied Mathematics} (\bibinfo{publisher}{{McGraw-Hill
  Higher Education}}, \bibinfo{year}{1976}), \bibinfo{edition}{{3rd Edition}}
  ed.

\bibitem[{\citenamefont{Konrad et~al.}(2008)\citenamefont{Konrad, de~Melo,
  Tiersch, Kasztelan, Arag{\~{a}}o, and Buchleitner}}]{Konrad08}
\bibinfo{author}{\bibfnamefont{T.}~\bibnamefont{Konrad}},
  \bibinfo{author}{\bibfnamefont{F.}~\bibnamefont{de~Melo}},
  \bibinfo{author}{\bibfnamefont{M.}~\bibnamefont{Tiersch}},
  \bibinfo{author}{\bibfnamefont{C.}~\bibnamefont{Kasztelan}},
  \bibinfo{author}{\bibfnamefont{A.}~\bibnamefont{Arag{\~{a}}o}},
  \bibnamefont{and}
  \bibinfo{author}{\bibfnamefont{A.}~\bibnamefont{Buchleitner}},
  \bibinfo{journal}{{Nature Physics}} \textbf{\bibinfo{volume}{4}},
  \bibinfo{pages}{99} (\bibinfo{year}{2008}).

\bibitem[{\citenamefont{Tiersch et~al.}(2008)\citenamefont{Tiersch, de~Melo,
  and Buchleitner}}]{Tiersch08}
\bibinfo{author}{\bibfnamefont{M.}~\bibnamefont{Tiersch}},
  \bibinfo{author}{\bibfnamefont{F.}~\bibnamefont{de~Melo}}, \bibnamefont{and}
  \bibinfo{author}{\bibfnamefont{A.}~\bibnamefont{Buchleitner}},
  \bibinfo{journal}{{Physical Review Letters}} \textbf{\bibinfo{volume}{101}},
  \bibinfo{pages}{170502} (\bibinfo{year}{2008}).

\bibitem[{\citenamefont{Tiersch et~al.}(2009)\citenamefont{Tiersch, de~Melo,
  Konrad, and Buchleitner}}]{Tiersch09}
\bibinfo{author}{\bibfnamefont{M.}~\bibnamefont{Tiersch}},
  \bibinfo{author}{\bibfnamefont{F.}~\bibnamefont{de~Melo}},
  \bibinfo{author}{\bibfnamefont{T.}~\bibnamefont{Konrad}}, \bibnamefont{and}
  \bibinfo{author}{\bibfnamefont{A.}~\bibnamefont{Buchleitner}},
  \bibinfo{journal}{{Quantum Information Processing}}
  \textbf{\bibinfo{volume}{8}}, \bibinfo{pages}{523} (\bibinfo{year}{2009}).

\bibitem[{\citenamefont{Gour}(2010)}]{Gour10}
\bibinfo{author}{\bibfnamefont{G.}~\bibnamefont{Gour}},
  \bibinfo{journal}{{Physical Review Letters}} \textbf{\bibinfo{volume}{105}},
  \bibinfo{pages}{190504} (\bibinfo{year}{2010}).

\bibitem[{\citenamefont{Hein et~al.}(2005)\citenamefont{Hein, D{\"{u}}r, and
  Briegel}}]{Hein05}
\bibinfo{author}{\bibfnamefont{M.}~\bibnamefont{Hein}},
  \bibinfo{author}{\bibfnamefont{W.}~\bibnamefont{D{\"{u}}r}},
  \bibnamefont{and} \bibinfo{author}{\bibfnamefont{H.-J.}
  \bibnamefont{Briegel}}, \bibinfo{journal}{{Physical Review A}}
  \textbf{\bibinfo{volume}{71}}, \bibinfo{pages}{032350}
  (\bibinfo{year}{2005}).

\bibitem[{\citenamefont{Nielsen and Chuang}(2000)}]{NielsenChuang}
\bibinfo{author}{\bibfnamefont{M.~A.} \bibnamefont{Nielsen}} \bibnamefont{and}
  \bibinfo{author}{\bibfnamefont{I.~L.} \bibnamefont{Chuang}},
  \emph{\bibinfo{title}{{Quantum Computation and Quantum Information}}}
  (\bibinfo{publisher}{{Cambridge University Press}}, \bibinfo{year}{2000}).

\bibitem[{\citenamefont{Bengtsson and Zyczkowski}(2006)}]{GeometryQuantum}
\bibinfo{author}{\bibfnamefont{I.}~\bibnamefont{Bengtsson}} \bibnamefont{and}
  \bibinfo{author}{\bibfnamefont{K.}~\bibnamefont{Zyczkowski}},
  \emph{\bibinfo{title}{{Geometry of Quantum States}}}
  (\bibinfo{publisher}{{Cambridge University Press}}, \bibinfo{year}{2006}).

\bibitem[{\citenamefont{Greenberger et~al.}(1995)\citenamefont{Greenberger,
  Horne, and Zeilinger}}]{Greenberger95}
\bibinfo{author}{\bibfnamefont{D.}~\bibnamefont{Greenberger}},
  \bibinfo{author}{\bibfnamefont{M.}~\bibnamefont{Horne}}, \bibnamefont{and}
  \bibinfo{author}{\bibfnamefont{A.}~\bibnamefont{Zeilinger}},
  \bibinfo{journal}{{Physical Review Letters}} \textbf{\bibinfo{volume}{75}},
  \bibinfo{pages}{2064} (\bibinfo{year}{1995}).

\bibitem[{\citenamefont{Vaidman}(1995)}]{Vaidman95}
\bibinfo{author}{\bibfnamefont{L.}~\bibnamefont{Vaidman}},
  \bibinfo{journal}{{Physical Review Letters}} \textbf{\bibinfo{volume}{75}},
  \bibinfo{pages}{2063} (\bibinfo{year}{1995}).

\bibitem[{\citenamefont{Paw{\l{}}owski and Czachor}(2006)}]{Pawlowski06}
\bibinfo{author}{\bibfnamefont{M.}~\bibnamefont{Paw{\l{}}owski}}
  \bibnamefont{and} \bibinfo{author}{\bibfnamefont{M.}~\bibnamefont{Czachor}},
  \bibinfo{journal}{{Physical Review A}} \textbf{\bibinfo{volume}{73}},
  \bibinfo{pages}{042111} (\bibinfo{year}{2006}).

\bibitem[{\citenamefont{Drezet}(2006)}]{Drezet06}
\bibinfo{author}{\bibfnamefont{A.}~\bibnamefont{Drezet}},
  \bibinfo{journal}{{Physical Review A}} \textbf{\bibinfo{volume}{74}},
  \bibinfo{pages}{026301} (\bibinfo{year}{2006}).

\bibitem[{\citenamefont{van Enk}(2005)}]{vanEnk05}
\bibinfo{author}{\bibfnamefont{S.}~\bibnamefont{van Enk}},
  \bibinfo{journal}{{Physical Review A}} \textbf{\bibinfo{volume}{72}},
  \bibinfo{pages}{064306} (\bibinfo{year}{2005}).

\bibitem[{\citenamefont{Lombardi et~al.}(2002)\citenamefont{Lombardi,
  Sciarrino, Popescu, and De~Martini}}]{Lombardi02}
\bibinfo{author}{\bibfnamefont{E.}~\bibnamefont{Lombardi}},
  \bibinfo{author}{\bibfnamefont{F.}~\bibnamefont{Sciarrino}},
  \bibinfo{author}{\bibfnamefont{S.}~\bibnamefont{Popescu}}, \bibnamefont{and}
  \bibinfo{author}{\bibfnamefont{F.}~\bibnamefont{De~Martini}},
  \bibinfo{journal}{{Physical Review Letters}} \textbf{\bibinfo{volume}{88}},
  \bibinfo{pages}{070402} (\bibinfo{year}{2002}).

\bibitem[{\citenamefont{L{\"{u}}tkenhaus
  et~al.}(1999)\citenamefont{L{\"{u}}tkenhaus, Calsamiglia, and
  Suominen}}]{Lutkenhaus99}
\bibinfo{author}{\bibfnamefont{N.}~\bibnamefont{L{\"{u}}tkenhaus}},
  \bibinfo{author}{\bibfnamefont{J.}~\bibnamefont{Calsamiglia}},
  \bibnamefont{and} \bibinfo{author}{\bibfnamefont{K.-A.}
  \bibnamefont{Suominen}}, \bibinfo{journal}{{Physical Review A}}
  \textbf{\bibinfo{volume}{59}}, \bibinfo{pages}{3295} (\bibinfo{year}{1999}).

\bibitem[{\citenamefont{Bj{\"{o}}rk et~al.}(2012)\citenamefont{Bj{\"{o}}rk,
  Laghaout, and Andersen}}]{Bjork12}
\bibinfo{author}{\bibfnamefont{G.}~\bibnamefont{Bj{\"{o}}rk}},
  \bibinfo{author}{\bibfnamefont{A.}~\bibnamefont{Laghaout}}, \bibnamefont{and}
  \bibinfo{author}{\bibfnamefont{U.}~\bibnamefont{Andersen}},
  \bibinfo{journal}{{Physical Review A}} \textbf{\bibinfo{volume}{85}},
  \bibinfo{pages}{022316} (\bibinfo{year}{2012}).

\bibitem[{\citenamefont{Bargmann}(1954)}]{Bargmann54}
\bibinfo{author}{\bibfnamefont{V.}~\bibnamefont{Bargmann}},
  \bibinfo{journal}{{The Annals of Mathematics}} \textbf{\bibinfo{volume}{59}},
  \bibinfo{pages}{1} (\bibinfo{year}{1954}).

\bibitem[{\citenamefont{Cisneros et~al.}(1998)\citenamefont{Cisneros,
  Mart{\'{i}}nez-y Romero, N{\'{u}}{\~{n}}ez-Y{\'{e}}pez, and
  Salas-Brito}}]{Cisneros98}
\bibinfo{author}{\bibfnamefont{C.}~\bibnamefont{Cisneros}},
  \bibinfo{author}{\bibfnamefont{R.~P.} \bibnamefont{Mart{\'{i}}nez-y Romero}},
  \bibinfo{author}{\bibfnamefont{H.~N.}
  \bibnamefont{N{\'{u}}{\~{n}}ez-Y{\'{e}}pez}}, \bibnamefont{and}
  \bibinfo{author}{\bibfnamefont{A.~L.} \bibnamefont{Salas-Brito}},
  \bibinfo{journal}{{European Journal of Physics}}
  \textbf{\bibinfo{volume}{19}}, \bibinfo{pages}{237} (\bibinfo{year}{1998}).

\bibitem[{\citenamefont{Ashhab et~al.}(2009)\citenamefont{Ashhab, Maruyama,
  Brukner, and Nori}}]{Ashhab09}
\bibinfo{author}{\bibfnamefont{S.}~\bibnamefont{Ashhab}},
  \bibinfo{author}{\bibfnamefont{K.}~\bibnamefont{Maruyama}},
  \bibinfo{author}{\bibfnamefont{{\v{C}}.}~\bibnamefont{Brukner}},
  \bibnamefont{and} \bibinfo{author}{\bibfnamefont{F.}~\bibnamefont{Nori}},
  \bibinfo{journal}{{Physical Review A}} \textbf{\bibinfo{volume}{80}},
  \bibinfo{pages}{062106} (\bibinfo{year}{2009}).

\bibitem[{\citenamefont{Heaney and Anders}(2009)}]{Heaney09}
\bibinfo{author}{\bibfnamefont{L.}~\bibnamefont{Heaney}} \bibnamefont{and}
  \bibinfo{author}{\bibfnamefont{J.}~\bibnamefont{Anders}},
  \bibinfo{journal}{{Physical Review A}} \textbf{\bibinfo{volume}{80}},
  \bibinfo{pages}{032104} (\bibinfo{year}{2009}).

\bibitem[{\citenamefont{Cunha et~al.}(2007)\citenamefont{Cunha, Dunningham, and
  Vedral}}]{Cunha07}
\bibinfo{author}{\bibfnamefont{M.~O.~T.} \bibnamefont{Cunha}},
  \bibinfo{author}{\bibfnamefont{J.~A.} \bibnamefont{Dunningham}},
  \bibnamefont{and} \bibinfo{author}{\bibfnamefont{V.}~\bibnamefont{Vedral}},
  \bibinfo{journal}{{Proceedings of the Royal Society A: Mathematical, Physical
  and Engineering Sciences}} \textbf{\bibinfo{volume}{463}},
  \bibinfo{pages}{2277} (\bibinfo{year}{2007}).

\bibitem[{\citenamefont{Paterek et~al.}(2011)\citenamefont{Paterek,
  Kurzy{\'n}ski, Oi, and Kaszlikowski}}]{Paterek11}
\bibinfo{author}{\bibfnamefont{T.}~\bibnamefont{Paterek}},
  \bibinfo{author}{\bibfnamefont{P.}~\bibnamefont{Kurzy{\'n}ski}},
  \bibinfo{author}{\bibfnamefont{D.~K.~L.} \bibnamefont{Oi}}, \bibnamefont{and}
  \bibinfo{author}{\bibfnamefont{D.}~\bibnamefont{Kaszlikowski}},
  \bibinfo{journal}{{New Journal of Physics}} \textbf{\bibinfo{volume}{13}},
  \bibinfo{pages}{043027} (\bibinfo{year}{2011}).

\bibitem[{\citenamefont{Vaccaro et~al.}(2008)\citenamefont{Vaccaro, Anselmi,
  Wiseman, and Jacobs}}]{Vaccaro08}
\bibinfo{author}{\bibfnamefont{J.}~\bibnamefont{Vaccaro}},
  \bibinfo{author}{\bibfnamefont{F.}~\bibnamefont{Anselmi}},
  \bibinfo{author}{\bibfnamefont{H.}~\bibnamefont{Wiseman}}, \bibnamefont{and}
  \bibinfo{author}{\bibfnamefont{K.}~\bibnamefont{Jacobs}},
  \bibinfo{journal}{{Physical Review A}} \textbf{\bibinfo{volume}{77}},
  \bibinfo{pages}{032114} (\bibinfo{year}{2008}).

\bibitem[{\citenamefont{Bartlett et~al.}(2006)\citenamefont{Bartlett, Doherty,
  Spekkens, and Wiseman}}]{Bartlett06-1}
\bibinfo{author}{\bibfnamefont{S.}~\bibnamefont{Bartlett}},
  \bibinfo{author}{\bibfnamefont{A.}~\bibnamefont{Doherty}},
  \bibinfo{author}{\bibfnamefont{R.}~\bibnamefont{Spekkens}}, \bibnamefont{and}
  \bibinfo{author}{\bibfnamefont{H.}~\bibnamefont{Wiseman}},
  \bibinfo{journal}{{Physical Review A}} \textbf{\bibinfo{volume}{73}},
  \bibinfo{pages}{022311} (\bibinfo{year}{2006}).

\bibitem[{\citenamefont{Bartlett and Wiseman}(2003)}]{Bartlett03-1}
\bibinfo{author}{\bibfnamefont{S.}~\bibnamefont{Bartlett}} \bibnamefont{and}
  \bibinfo{author}{\bibfnamefont{H.}~\bibnamefont{Wiseman}},
  \bibinfo{journal}{{Physical Review Letters}} \textbf{\bibinfo{volume}{91}},
  \bibinfo{pages}{097903} (\bibinfo{year}{2003}).

\bibitem[{\citenamefont{Schuch et~al.}(2004)\citenamefont{Schuch, Verstraete,
  and Cirac}}]{Schuch04}
\bibinfo{author}{\bibfnamefont{N.}~\bibnamefont{Schuch}},
  \bibinfo{author}{\bibfnamefont{F.}~\bibnamefont{Verstraete}},
  \bibnamefont{and} \bibinfo{author}{\bibfnamefont{J.}~\bibnamefont{Cirac}},
  \bibinfo{journal}{{Physical Review A}} \textbf{\bibinfo{volume}{70}},
  \bibinfo{pages}{042310} (\bibinfo{year}{2004}).

\bibitem[{\citenamefont{Heaney et~al.}(2010)\citenamefont{Heaney, Lee, and
  Jaksch}}]{Heaney10}
\bibinfo{author}{\bibfnamefont{L.}~\bibnamefont{Heaney}},
  \bibinfo{author}{\bibfnamefont{S.-W.} \bibnamefont{Lee}}, \bibnamefont{and}
  \bibinfo{author}{\bibfnamefont{D.}~\bibnamefont{Jaksch}},
  \bibinfo{journal}{{Physical Review A}} \textbf{\bibinfo{volume}{82}},
  \bibinfo{pages}{042116} (\bibinfo{year}{2010}).

\bibitem[{\citenamefont{Ashhab et~al.}(2007)\citenamefont{Ashhab, Maruyama, and
  Nori}}]{Ashhab07-1}
\bibinfo{author}{\bibfnamefont{S.}~\bibnamefont{Ashhab}},
  \bibinfo{author}{\bibfnamefont{K.}~\bibnamefont{Maruyama}}, \bibnamefont{and}
  \bibinfo{author}{\bibfnamefont{F.}~\bibnamefont{Nori}},
  \bibinfo{journal}{{Physical Review A}} \textbf{\bibinfo{volume}{75}},
  \bibinfo{pages}{022108} (\bibinfo{year}{2007}).

\bibitem[{\citenamefont{Wiseman and Vaccaro}(2003)}]{Wiseman03}
\bibinfo{author}{\bibfnamefont{H.~M.} \bibnamefont{Wiseman}} \bibnamefont{and}
  \bibinfo{author}{\bibfnamefont{J.~A.} \bibnamefont{Vaccaro}},
  \bibinfo{journal}{{Physical Review Letters}} \textbf{\bibinfo{volume}{91}},
  \bibinfo{pages}{097902} (\bibinfo{year}{2003}).

\bibitem[{\citenamefont{Jones et~al.}(2006)\citenamefont{Jones, Wiseman,
  Bartlett, Vaccaro, and Pope}}]{Jones06}
\bibinfo{author}{\bibfnamefont{S.}~\bibnamefont{Jones}},
  \bibinfo{author}{\bibfnamefont{H.}~\bibnamefont{Wiseman}},
  \bibinfo{author}{\bibfnamefont{S.}~\bibnamefont{Bartlett}},
  \bibinfo{author}{\bibfnamefont{J.}~\bibnamefont{Vaccaro}}, \bibnamefont{and}
  \bibinfo{author}{\bibfnamefont{D.}~\bibnamefont{Pope}},
  \bibinfo{journal}{{Physical Review A}} \textbf{\bibinfo{volume}{74}},
  \bibinfo{pages}{062313} (\bibinfo{year}{2006}).

\bibitem[{\citenamefont{Bruderer and Jaksch}(2006)}]{Bruderer06}
\bibinfo{author}{\bibfnamefont{M.}~\bibnamefont{Bruderer}} \bibnamefont{and}
  \bibinfo{author}{\bibfnamefont{D.}~\bibnamefont{Jaksch}},
  \bibinfo{journal}{{New Journal of Physics}} \textbf{\bibinfo{volume}{8}},
  \bibinfo{pages}{87} (\bibinfo{year}{2006}).

\bibitem[{\citenamefont{Heaney and Vedral}(2009)}]{Heaney09-1}
\bibinfo{author}{\bibfnamefont{L.}~\bibnamefont{Heaney}} \bibnamefont{and}
  \bibinfo{author}{\bibfnamefont{V.}~\bibnamefont{Vedral}},
  \bibinfo{journal}{{Physical Review Letters}} \textbf{\bibinfo{volume}{103}},
  \bibinfo{pages}{200502} (\bibinfo{year}{2009}).

\end{thebibliography}

\end{document}